\newtheorem{theorem}{Theorem}[section]
\newtheorem{lemma}[theorem]{Lemma}
\title{Maximum Flow in Directed Planar Graphs with\\Vertex Capacities}
\author{Haim Kaplan~\thanks{The Blavatnik School of Computer Science, 
Tel Aviv University, 69978 Tel Aviv, Israel, 
{\small \texttt{\{haimk,yahav.nussbaum\}@cs.tau.ac.il}}}\and Yahav Nussbaum~$^\ast$}
\date{}
\begin{document}

\maketitle

\begin{abstract}
    In this paper we present an  $O(n\log n)$ algorithm for  finding
    a maximum flow in a directed planar graph, where the vertices are subject to capacity constraints, in addition to the arcs.
    If the source and the sink are on the same face,
    then our algorithm can be implemented in  $O(n)$ time.

    For general (not planar) graphs, vertex capacities do not make the problem more difficult,
    as there is a simple reduction that eliminates vertex capacities.
    However, this reduction does not preserve the planarity of the graph.
    The essence of our algorithm is a different reduction that does preserve the planarity,
      and can be implemented in linear time.
     For the special case of undirected planar graph,
      an algorithm with the same time complexity was recently claimed, but we show that it has a flaw.
\end{abstract}

\section{Introduction}

The problem of finding maximum flow in a graph  is a well-studied
problem with applications in many fields, see the book of Ahuja,
Magnanti and Orlin~\cite{AMO93} for a survey. The maximum flow
problem is also interesting if we restrict it to \emph{planar}
graphs, which are graphs that have an embedding in the plane without
crossing edges. The case of planar graphs appears in many
applications of the problem, for example road traffic or VLSI
design. The special structure of planar graphs allows us to get simpler
and more efficient algorithms for the maximum flow and related problems.

In the maximum flow problem, usually the arcs  of the graph have
capacities which limit the amount of flow that may go through each
arc. We study a version of the problem in which the vertices of the
graph also have capacities, which limit the amount of flow that may
enter each vertex. This version appears for example when
computing vertex disjoint paths in graphs, and  in other problems
where the vertices model objects which have a capacity.

Ford and Fulkerson~\cite[Chapter I.11]{FF62} studied this version of
the problem. They suggested the following simple reduction to
eliminate vertex capacities. We replace every vertex $v$ with a
finite capacity $c$ by two vertices $v'$ and $v''$. The arcs that were
directed into $v$ now enter $v'$, and the arcs that were directed out of
$v$ now leave $v''$. We also add a new arc with capacity $c$ from
$v'$ to $v''$. Unfortunately, this reduction does not preserve the
planarity of the graph \cite{KN94}. Consider for example the
complete graph of four vertices, where one of the vertices has
finite capacity. This graph is planar. If we apply the construction
of Ford and Fulkerson we get a graph whose underlying undirected graph
is the compete graph with 5 vertices. This graph
is not planar by Kuratowski's Theorem.

The most efficient algorithm for  maximum flow in directed planar
graphs without vertex capacities, to date, was given by Borradaile
and Klein~\cite{BK} (Weihe~\cite{W97} gave an  algorithm with the
same time bound but assuming  certain connectivity condition on the
graph). Their paper also contains a survey of the history of the
maximum flow problem on planar graphs. The time bound of the
algorithm of \cite{BK} is $O(n\log n)$ where $n$ is the number of
vertices in the input graph. Borradaile and Klein ask whether their
algorithm can be generalized to the case where the flow is subject
to vertex capacities.

A planar  graph is a \emph{$st$-planar} graph if the source and the
sink are on the same face. Hassin~\cite{H81} gave an algorithm for
the maximum flow problem in directed $st$-planar graphs without
vertex capacities. The bottleneck of the algorithm is the
computation of single-source shortest-path distances, which takes
$O(n)$ time in a planar graph, using the algorithm of Henzinger et
al.~\cite{HKRS97}.

 Khuller and Naor~\cite{KN94} were the first to study the  problem of maximum flow with vertex
capacities in planar graphs. They gave various results, including an
$O(n\sqrt{\log n})$ time algorithm for finding the value of the
maximum flow in $st$-planar graphs (which can be improved to $O(n)$
time using the algorithm of \cite{HKRS97}), an $O(n\log n)$ time
algorithm for finding the maximum flow in $st$-planar graphs, an
$O(n\log n)$ time algorithm for finding the value of the maximum
flow in undirected planar graph, and an $O(n^{1.5}\log n)$ time
algorithm for the same problem on directed planar graphs. If all
vertices have unit capacities, then we get the \emph{vertex-disjoint
paths problem}. Ripphausen-Lipa et al.~\cite{RLW97} solved this
problem in $O(n)$ time for undirected planar graph.

Recently, Zhang, Liang and Chen~\cite{ZLC08}, used a construction
similar to the one of \cite{KN94} to obtain a maximum flow for
undirected planar graph with vertex capacities that runs in $O(n\log
n)$ time. Their algorithm constructs a planar graph without vertex
capacities, and then uses the algorithm of \cite{BK} to find a
maximum flow on it, which is modified in $O(n\log n)$ time to a
flow in the original graph with vertex capacities.
They also gave an $O(n)$ time algorithm for undirected
$st$-planar graphs. Zhang et al.\ also ask in their paper if there
is an algorithm that solves the problem for directed planar graph.

In this paper we answer \cite{BK} and \cite{ZLC08},  and show a
linear time reduction of the problem of finding maximum flow in
directed planar graphs with arc and vertex capacities, to the
problem of finding maximum flow in directed graphs with only arc
capacities. This problem is more general than the one for undirected
planar graphs, since an undirected planar graph can be viewed as a
special case of a directed planar graph, in which there are two
opposite arcs between any pair of adjacent vertices.

We show how to apply the constructions of \cite{KN94} and \cite{ZLC08}
to directed planar graphs.
Given directed planar graph, we construct another directed planar graph
without vertex capacities, such that we can transform a maximum flow
in the new graph  back to a maximum flow in the original graph.
Since the new graph does not have vertex capacities we can find a
maximum flow in it using  the algorithm of \cite{BK} (or of
\cite{H81} if it is an $st$-planar graph). The time bound of our
reduction is linear in the size of the graph, therefore we show that
vertex capacities do not increase the time complexity of the maximum
flow problem also for planar graphs.

In addition, we show that the algorithm of \cite{ZLC08}
unfortunately has a flaw. We give an undirected graph in  which this
algorithm does not find a correct maximum flow. Therefore, in fact our
algorithm is also the first to solve the problem for undirected
graphs.

The outline of the paper is as follows: In the next  section we give
some background and terminology. In Section \ref{sec:cut} we
describe the construction of \cite{KN94} that we use, and in Section
\ref{sec:ext} we describe the one of \cite{ZLC08}. In Section
\ref{sec:acy} we characterize when a maximum flow in the constructed graph
induces a maximum flow in the original graph. In Section
\ref{sec:canc} we show how to efficiently find such a flow in the
constructed graph.  Finally, in the last section we stick all the
pieces together to get our algorithm.

\section{Preliminaries}

We consider a simple directed planar graph  $G = (V, E)$, where $V$
is the set of vertices and $E$ is the set of arcs, with a given
planar embedding. The planar embedding of the graph $G$ is
represented combinatorially, see \cite{NC88}
for survey on planar graphs. An arc $e = (u, v) \in E$ is directed
from $u \in V$ to $v \in V$. We denote the number of vertices by
$n$, since  the graph is planar we have $|E| = O(n)$.

A \emph{path} $P = (e_0, e_1, \dots, e_{k-1})$ is a sequence of arcs $e_i = (u_i, v_i)$ such that for $0 \leq i < k-1$ we have $v_i = u_{i+1}$. If in addition $v_{k-1} = u_0$ then $P$ is a \emph{cycle}. We say that a path $P$ \emph{contains} a vertex $v$, if either $(u, v)$ or $(v, u)$ is in $P$, for some vertex $u$. The path $P = (e_0, e_1, \dots, e_{k-1})$ \emph{starts} at $u_0$ and \emph{ends} at $v_{k-1}$.
For $v \in V$, $in(v) = \{(u, v) \mid (u, v) \in
E\}$ is the set of incoming arcs and $out(v) = \{(v, u) \mid (v, u)
\in E\}$ is the set of outgoing arcs.

The graph $G$ has two distinguished vertices, $s \in V$ is the
\emph{source} and $t \in V$ is the \emph{sink}. The source $s$ has no
incoming arcs, and the sink $t$ has no outgoing arcs. Every arc $e \in E$,
has a capacity $c(e) \geq 0$, and in addition every vertex $v \in V
\setminus \{s, t\}$ has a capacity $c(v) \geq 0$. A capacity might be $\infty$. We assume that the
source and the sink have no capacities, if we wish to allow them to
have capacities, we can add a vertex $s'$ that will be the source
instead of $s$, and an arc $(s', s)$ with the desired capacity, and
similarly add a new sink $t'$, and an arc $(t, t')$ with the desired
capacity. Note that this transformation keeps the graph planar, and
even $st$-planar if it was so. It
is easy to extend the given embedding to accommodate $s'$, $t'$,
and the arcs $(s',s)$ and $(t,t')$. A graph without vertex
capacities can be viewed as a special case in which $c(v) = \infty$
for every vertex.

A function $f: E \to R$ is a \emph{flow function} if and only if
it satisfies the following three constraints:
\begin{align}
    0 \leq f(e) \leq c(e) \quad &\forall e \in E \label{con:ec} \\
    \sum_{e \in in(v)} f(e) \leq c(v) \quad &\forall v \in V \setminus \{s, t\} \label{con:vc} \\
    \sum_{e \in in(v)} f(e) = \sum_{e \in out(v)} f(e) \quad &\forall v \in V \setminus \{s, t\} \label{con:0}
\end{align}
Constraints \eqref{con:ec} are the \emph{arc capacity
  constraints}, Constraints \eqref{con:vc} are the
\emph{vertex capacity constraints} and Constraints \eqref{con:0}
are the \emph{flow conservation constraints}.

We say that $e \in in(v)$ \emph{carries flow into} $v$ if $f(e) > 0$, and that $e' \in out(v)$ \emph{carries flow out of} $v$ if $f(e') > 0$.

The \emph{value} of a flow $f$ is $\sum_{e \in in(t)} f(e)$, the amount of flow which enters the sink. If the value of $f$ is $0$ then $f$ is a \emph{circulation}. Our goal, in the \emph{maximum flow problem}, is to find a  flow function of maximum value.

For a flow function $f$ we define a
 cycle $C$  to be
a \emph{flow-cycle} if $f(e) > 0$ for every arc in
$C$.
We extend this definition to every function $f: E \to R$, even if it is not a flow.
If a function $f$ has no flow-cycles we say that $f$ is {\em acyclic}.
 An \emph{acyclic flow} is a  flow function
which is acyclic.

Let $e = (u, v)$ we denote $rev(e) = (v, u)$.
For a flow function $f$, we may assume that $f$ does not contain an arc $e$ such that
both $f(e) > 0$ and $f(rev(e)) > 0$, because otherwise the flows in both directions can cancel each other.
For a path $P = (e_0, e_1, \dots, e_{k-1})$ we let $rev(P) = (rev(e_{k-1}), rev(e_{k-2}), \dots, rev(e_0))$.

The planar embedding of $G$ partitions the plane into connected
regions called $\emph{faces}$.
For a simpler description of our algorithm, we fix an embedding of $G$
such that $t$ is on the boundary of the infinite face.
It is easy to convert any given embedding to such an embedding \cite{NC88}.

The
\emph{dual graph} $G_d$ of $G$ has a vertex $D(h)$ for every  face $h$
of $G$, and an arc
$D(e)$ for every arc $e$ of $G$. The arc $D(e)$ connects the two vertices
corresponding to the faces incident to $e$. The arc $D(e)$ is directed from
the vertex that corresponds to the face on the
left side of $e$ to the one of the face on the right side of $e$.
Intuitively, $G_d$ is obtained from $G$ by turning the arcs clockwise.
The dual graph $G_d$ is planar, but it is may have loops or parallel arcs.
Every face $h$ of $G_d$ corresponds to a vertex $v$ in $G$, such that the
arcs that bound $h$ are dual to the arcs that are incident to $v$.
See Figure \ref{fig:Gd}.
The capacity of $e \in E$, $c(e)$, is interpreted in $G_d$ as the \emph{length}
of $D(e)$.

\begin{figure}
    \centering
    \includegraphics{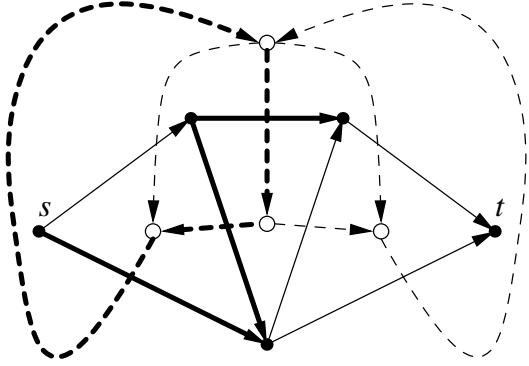}
    \caption{A planar graph and its dual graph. The vertices of $G$ are dots, and its arcs are solid. The vertices of $G_d$ are circles, and its arcs are dashed. The bold arcs are an arc-cut in $G$ and a cut-cycle in $G_d$. Capacities are not shown in this figure.}
    \label{fig:Gd}
\end{figure}

In the construction we present below we add undirected edges to
directed graphs.
Each such undirected edge $uv$ can be represented by
two antiparallel
directed arcs $(u, v)$ and $(v, u)$, with the same capacity.
If $e$ is
an undirected edge, then $D(e)$ is also undirected.

\subsection{Residual cycles} \label{sec:rcyc}

In this section we present the algorithm of Khuller, Naor and Klein~\cite{KNK93}
that finds a circulation without clockwise residual cycles, in a directed
planar graph. The complete details of the algorithm can be found in \cite{KNK93}
and also in \cite{BK}.
We also present an extension of this algorithm that changes a given flow into
another flow, with the same value, without clockwise residual cycles.
We use this algorithm later to get the linear time bound for our reduction.
This algorithm of \cite{KNK93} was given for directed planar graphs
without vertex capacities, so for this section assume that the graph $G$
does not have vertex capacities.

In this section we also assume that if $e \in E$ then also $rev(e) \in E$.
This assumption can be
satisfied, without changing the problem, by adding the arc $rev(e)$ with
capacity $0$ for every arc $e$ such that $rev(e)$ is not in $E$.

Let $f$ be a flow in $G$.
The \emph{residual capacity} of an arc $e$ with respect to $f$ is defined as
$c_r(e) = c(e) - f(e) + f(rev(e))$. In other words, the residual capacity of $e$
is the amount of flow that we can add to $e$, or reduce from $rev(e)$.
The \emph{residual graph} of $G$ with respect to $f$ has the same vertex set
and arc set as $G$, and the capacity for each edge $e$ is $c_r(e)$.
A \emph{residual arc} with respect to $f$ is an arc $e$ with a positive
residual capacity.
A \emph{residual path} is a path made of residual arcs. A \emph{residual cycle} is a
cycle made of residual arcs.

Given embedded planar graph $G$ with arc capacities,
 Khuller et al.~\cite{KNK93} showed how to
find a circulation $f_r$ in $G$, such that there are no
clockwise residual cycles with respect to $f_r$.
Their procedure is as follows.

Let $h_\infty$ be
the the infinite face of $G$. Find
the shortest path distances from $D(h_\infty)$ to every vertex of $G_d$.
Define a \emph{potential function} $\phi$ that assigns to every face
$h$ of $G$, the shortest distance from $D(h_\infty)$ to
$D(h)$ in $G_d$. Let $e$ be an arc of $G$ with face $h_\ell$ to its left
and face $h_r$ to its right. If $\phi(h_r) \geq \phi(h_\ell)$ then
$f_r(e) = \phi(h_r) - \phi(h_\ell)$ (otherwise it is implied that
$f_r(rev(e))$ gets the value $\phi(h_\ell) - \phi(h_r)$ and $f_r(e) = 0$).

The function $f_r$ satisfies the two constraints of a flow \cite{KNK93} (without vertex
capacities). The edge capacity constraints are satisfied, because for each $e$,
the arc $D(e)$ connects $D(h_\ell)$ to $D(h_r)$, so $\phi(h_r) - \phi(h_\ell)$ is
at most the length of $D(e)$, which is the capacity of $e$.
The flow conversion constraints are satisfied because for every cycle in $G_d$,
the sum of $f_r(e) - f_r(rev(e))$ for the arcs of the cycle is $0$ \cite{H81,J87}, and the
vertices which incident to a specific vertex in $G$ form a cycle in $G_d$.
Therefore, $f_r$ is a circulation.

The correctness of this algorithm follows from the fact that any clockwise
cycle $C$ in $G$ encloses some face $h$. The shortest path from $D(h_\infty)$
to $D(h)$ in $G_d$ must contain an arc $D(e)$
dual to an arc $e$ of $C$. Because $D(e)$ is in the shortest paths tree, the
algorithm assigns to $f_r(e)$ the value of the length of $D(e)$, which is
the same as the capacity of $e$. Therefore the residual
capacity of $e$ with respect to $f_r$ is $c(e) - f_r(e) + f_r(rev(e)) = 0$,
and $e$ is not a residual arc with respect to $f_r$. Therefore, $C$ is not
a residual cycle with respect to $f_r$.

The bottleneck of the algorithm of \cite{KNK93} is the
computation of single-source shortest-path distances from the vertex
$D(h_\infty)$ in the dual graph. Henzinger
et al.~\cite{HKRS97} showed how to find these distances in a planar
graph in $O(n)$ time, so the algorithm of \cite{KNK93}
can be implemented in the same time bound.

Given a flow $f$ in $G$, we wish to find a flow $f'$ with the same value, such
that $f'$ does not have clockwise residual cycles. We can use the algorithm
of \cite{KNK93} which we described above for this problem.

Let $G'$ be the residual graph of $G$ with respect to $f$. We find a
circulation $f'_r$ in $G'$, such that $G'$ does not have clockwise
residual cycles with respect to $f'_r$, using the algorithm of \cite{KNK93}. Define
$f'$ to be the sum of $f$ and $f'_r$, that is $f'(e) = \max\{0, f(e) + f'_r(e) - [f(rev(e)) + f'_r(rev(e))]\}$.
In other words, we add to $f(e)$ the flow of $f'_r(e)$, and let the flows
on $e$ and $rev(e)$ to cancel each other.

The function $f'$ satisfies the two constraints of a flow without vertex
capacities. The capacity of an arc
$e$ in $G'$ is $c(e) - f(e) + f(rev(e))$ and therefore $f'_r(e)$ is
smaller than this capacity, therefore $f(e) + f'_r(e) - [f(rev(e)) + f'_r(rev(e))]
\leq c(e)$, so $f'(e) \leq c(e)$ and the arc capacity constraints are satisfied in $f'$.
The conservation constraints are satisfied, because these constrains
are satisfied for $f$ and for $f'_r$, and $f'$ is the sum of these two flows.

The value of the flow $f'$ is the sum of the values of $f$ and $f'_r$. Since
$f'_r$ is a circulation, its value is $0$, and so the value of $f'$ is the
same as the value of $f$.

The flow $f'$ has the desired property that it has no clockwise residual
cycles. To show that, we show that if $C$ is a clockwise residual cycle in
$G$ with respect to $f'$, then $C$ is also a residual cycle in $G'$
with respect to $f'_r$, contrary to the way we find $f'_r$.
Let $e$ be an arc of $C$, and assume for contradiction
that $e$ is not residual in $G'$ with respect to $f'_r$.  From our
assumption $f'_r(e) = c_r(e) = c(e) - f(e) + f(rev(e))$ and $f'_r(rev(e)) = 0$.
Therefore, $f'(e) = c(e)$ and $e$ is not residual in $G$ with respect to $f'$, in
contradiction to the fact that it is a member of $C$.

\begin{lemma}
	Let $G$ be a directed planar graph without vertex capacities, and let $f$ be a flow in $G$. We can find a flow $f'$ in $G$, with the same value as $f$, such that $f'$ does not have clockwise residual cycles, in $O(n)$ time.
\end{lemma}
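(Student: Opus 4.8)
The plan is to assemble the lemma directly from the algorithm and its correctness argument that the excerpt has already developed in full prose, since the preceding paragraphs essentially carry out every substantive step. The statement claims three things about the output flow $f'$: that it is a valid flow, that it has the same value as $f$, and that it has no clockwise residual cycles, all obtainable in $O(n)$ time. I would simply organize the existing observations into a clean sequence and invoke the already-established circulation algorithm of~\cite{KNK93} as a black box.

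**First I would** recall the construction: given the input flow $f$, form the residual graph $G'$ of $G$ with respect to $f$, where each arc $e$ has capacity $c_r(e) = c(e) - f(e) + f(rev(e))$. Here I use the section's standing assumption that $rev(e) \in E$ whenever $e \in E$, which is harmless since we may add reverse arcs of capacity $0$. I then apply the algorithm of~\cite{KNK93} to $G'$ to obtain a circulation $f'_r$ with no clockwise residual cycles in $G'$. Since $G'$ is a planar graph with arc capacities only (the section assumes $G$ has no vertex capacities), this algorithm applies and runs in $O(n)$ time, its bottleneck being a single-source shortest-path computation in the dual, done in linear time by~\cite{HKRS97}. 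Defining $f'(e) = \max\{0,\ f(e) + f'_r(e) - [f(rev(e)) + f'_r(rev(e))]\}$ gives the desired flow.

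**Next I would** verify the three required properties in turn. For feasibility, the arc-capacity bound follows because $f'_r(e) \le c_r(e) = c(e) - f(e) + f(rev(e))$ forces $f'(e) \le c(e)$, and the conservation constraints hold because $f'$ is the superposition of the two conservation-respecting functions $f$ and $f'_r$. For the value, I note that $f'_r$ is a circulation, hence of value $0$, so $f'$ inherits the value of $f$. For the absence of clockwise residual cycles, I argue by contrapositive exactly as the excerpt does: if $C$ were a clockwise residual cycle for $f'$, then for each arc $e \in C$, assuming $e$ is non-residual in $G'$ would force $f'_r(e) = c_r(e)$ and $f'_r(rev(e)) = 0$, whence $f'(e) = c(e)$, contradicting that $e$ lies on a residual cycle of $f'$. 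Thus every arc of $C$ is residual in $G'$ with respect to $f'_r$, making $C$ a clockwise residual cycle in $G'$ and contradicting the guarantee from~\cite{KNK93}.

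**The only delicate point** — and the step I would watch most carefully — is the interaction between the $\max\{0, \cdot\}$ truncation in the definition of $f'$ and the cancellation of antiparallel flow, namely confirming that after cancellation the residual-capacity accounting on $e$ versus $rev(e)$ behaves as claimed (so that $c(e) - f'(e) + f'(rev(e))$ equals $c_r(e) - f'_r(e) + f'_r(rev(e))$, which is the identity that makes the residual-cycle argument go through). This is a routine but easy-to-mishandle bookkeeping check about how the two-sided flows combine; I expect no genuine obstacle, since the preceding text has already carried out the essential inequality, and the $O(n)$ bound is immediate from the cited shortest-path result.
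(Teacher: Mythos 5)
Your proposal is correct and follows essentially the same route as the paper: form the residual graph $G'$, run the algorithm of \cite{KNK93} on it to get a circulation $f'_r$ with no clockwise residual cycles, superpose via $f'(e) = \max\{0, f(e) + f'_r(e) - [f(\mathit{rev}(e)) + f'_r(\mathit{rev}(e))]\}$, and verify feasibility, value preservation, and the residual-cycle property by the same contrapositive argument, with $O(n)$ time from \cite{HKRS97}. The ``delicate point'' you flag indeed goes through, since $f'(e) - f'(\mathit{rev}(e)) = f(e) + f'_r(e) - f(\mathit{rev}(e)) - f'_r(\mathit{rev}(e))$, so residual capacities with respect to $f'$ in $G$ coincide with those with respect to $f'_r$ in $G'$.
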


\section{Minimum cut} \label{sec:cut}

In a graph without vertex capacities, a \emph{cut} $S$ is a minimal subset of $E$ such that every path from $s$ to $t$ contains an arc in $S$. To avoid ambiguity later, when we introduce cuts that may contain vertices, we call such a cut an \emph{arc-cut}. See Figure \ref{fig:Gd}. The \emph{value} of an arc-cut $S$ is $\sum_{e \in S} c(e)$. The \emph{minimum cut problem} asks to find an arc-cut of minimum value. The fundamental connection between maximum flow and the minimum cut problems was given by Ford and Fulkerson~\cite{FF62} in the Max-Flow Min-Cut Theorem:
\begin{theorem}\cite{FF62} \label{thm:mfmc}
    The value of the maximum flow (in a graph without vertex capacities) is equal to the value of the minimum arc-cut in the same graph.
\end{theorem}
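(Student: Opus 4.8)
The plan is to prove the two inequalities separately, following the classical argument of Ford and Fulkerson. Write $\nu(f) = \sum_{e \in in(t)} f(e)$ for the value of a flow $f$, and recall that since $s$ has no incoming arcs and $t$ has no outgoing arcs, the conservation constraints \eqref{con:0} give $\nu(f) = \sum_{e \in out(s)} f(e)$ as well. For a vertex set $A$ with $s \in A$ and $t \notin A$, let $\delta^+(A)$ denote the arcs directed from $A$ to $B := V \setminus A$, and $\delta^-(A)$ the arcs directed from $B$ to $A$.

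For the easy direction, maximum flow $\le$ minimum cut (weak duality), let $S$ be any arc-cut and let $A$ be the set of vertices reachable from $s$ after the arcs of $S$ are deleted; then $t \in B$, and every arc directed from $A$ to $B$ must lie in $S$ (otherwise its head would be reachable from $s$), so $\delta^+(A) \subseteq S$. I would sum the conservation constraints \eqref{con:0} over all $v \in A \setminus \{s\}$ and use that $s$ is the unique source in $A$ to obtain the identity $\nu(f) = \sum_{e \in \delta^+(A)} f(e) - \sum_{e \in \delta^-(A)} f(e)$. The arc-capacity constraints \eqref{con:ec}, namely $0 \le f(e) \le c(e)$, then yield $\nu(f) \le \sum_{e \in \delta^+(A)} c(e) \le \sum_{e \in S} c(e)$, the final step using $\delta^+(A) \subseteq S$ and nonnegativity of the capacities. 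Hence every flow value is at most every cut value.

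For the hard direction I would exhibit a single flow and cut of equal value. Take a maximum flow $f$, whose existence follows either from the termination of augmenting-path search or from a compactness argument on the bounded polytope cut out by \eqref{con:ec}--\eqref{con:0}. Form the residual graph of $G$ with respect to $f$ as in Section \ref{sec:rcyc}, and let $A$ be the set of vertices reachable from $s$ by a residual path. Because $f$ is maximum there is no residual $s$-$t$ path, since augmenting along such a path would strictly increase $\nu(f)$, so $t \in B$. By the definition of $A$, every arc of $\delta^+(A)$ is saturated, $f(e) = c(e)$, and every arc of $\delta^-(A)$ carries no flow, $f(e) = 0$, for otherwise $B$ would contain a vertex reachable from $s$ in the residual graph. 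Substituting into the net-flow identity of the previous paragraph gives $\nu(f) = \sum_{e \in \delta^+(A)} c(e)$.

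Combining the two parts finishes the proof. Choosing a minimal arc-cut $S \subseteq \delta^+(A)$, weak duality gives $\nu(f) \le \sum_{e \in S} c(e)$, while $\sum_{e \in S} c(e) \le \sum_{e \in \delta^+(A)} c(e) = \nu(f)$ because $S \subseteq \delta^+(A)$; hence $\sum_{e \in S} c(e) = \nu(f)$. Since weak duality simultaneously shows that no cut can have smaller value than $\nu(f)$ and no flow can have larger value than $\sum_{e \in S} c(e)$, this $f$ is a maximum flow, this $S$ is a minimum arc-cut, and their common value is the desired quantity. The main obstacle I anticipate is the strong-duality direction: the delicate points are verifying that the absence of a residual $s$-$t$ path forces the saturation and zero-flow conditions on $\delta^+(A)$ and $\delta^-(A)$, reconciling the vertex-partition $(A,B)$ with the minimal arc-set definition of an arc-cut, and handling the existence of an actual maximum flow rather than merely a supremum.
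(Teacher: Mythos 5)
Your proof is correct: it is the classical weak-duality-plus-residual-reachability argument of Ford and Fulkerson, including the right care in passing from $\delta^+(A)$ to a minimal arc-cut $S\subseteq\delta^+(A)$. The paper itself gives no proof of this statement --- it is quoted directly from \cite{FF62} --- and your argument is essentially the one in that reference, so there is nothing to reconcile beyond the minor caveat that arcs of infinite capacity (which the paper permits) require either assuming finite capacities or noting that both optima are then $+\infty$, since your compactness argument for the existence of a maximum flow presupposes a bounded polytope.
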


Let $C$ be a cycle in $G_d$. We say that $C$ is a \emph{cut-cycle} if
it separates the faces corresponding to $s$ and $t$, and goes
counterclockwise around $s$ (or equivalently, clockwise around $t$).
See Figure \ref{fig:Gd}. The length of $C$ is the sum of the lengths of its
 arcs. Johnson~\cite{J87} showed the following relation between
the value of minimum arc-cut and the value of shortest cut-cycle:
\begin{lemma} \cite{J87} \label{lem:mcc}
    Let $G$ be a directed planar graph without vertex capacities. Then the value of the minimum arc-cut of $G$, is the same as the length of the shortest cut-cycle in $G_d$.
\end{lemma}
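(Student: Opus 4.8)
The plan is to establish a value-preserving correspondence between arc-cuts of $G$ and cut-cycles of $G_d$, and then read the equality of the two optima off this correspondence. Recall that the length of $D(e)$ equals $c(e)$, so the length of any cut-cycle $C$ is $\sum_{e \,:\, D(e) \in C} c(e)$, which is exactly the value of an arc-cut consisting of those same primal arcs. It therefore suffices to prove two inequalities: that every arc-cut gives rise to a cut-cycle of no greater length, and that every cut-cycle gives rise to an arc-cut of no greater value.

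For the first direction I would start from a minimum arc-cut $S$ and delete its arcs from $G$. Let $A$ be the set of vertices still reachable from $s$, and $B = V \setminus A$; since $S$ separates $s$ from $t$ we have $s \in A$ and $t \in B$. Minimality of $S$ forces $S$ to be exactly the set of arcs directed from $A$ to $B$: an arc with both endpoints on the same side, or one directed from $B$ to $A$, lies on no $s$-to-$t$ path that uses it as its unique $S$-arc, and hence could be dropped from $S$; conversely any arc from $A$ to $B$ not in $S$ would make its head reachable from $s$, a contradiction. Because $A$ is reachable from $s$ it occupies a connected region of the plane, and the arcs crossing the $A/B$ boundary are precisely those dualized by $\{\,D(e) : e \in S\,\}$. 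These dual arcs therefore trace a closed curve in $G_d$ that separates the face corresponding to $s$ from the face corresponding to $t$. Since every arc of $S$ points from the $s$-side to the $t$-side and $D(e)$ is obtained by rotating $e$ clockwise, this curve is oriented counterclockwise around $s$, i.e.\ it is a cut-cycle, and its length equals the value of $S$.

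For the second direction I would take a shortest cut-cycle $C$ and consider the primal arc set $S' = \{\,e : D(e) \in C\,\}$. Since $C$ separates the face of $s$ from the face of $t$, the vertices $s$ and $t$ lie on opposite sides of the closed curve $C$, so every path from $s$ to $t$ in $G$ must cross $C$ and hence traverse some arc of $S'$. The orientation condition that $C$ runs counterclockwise around $s$ (equivalently clockwise around $t$) guarantees that such a path crosses $C$ in the forward direction, using an arc of $S'$ in its own direction rather than its reverse. Thus $S'$ meets every $s$-to-$t$ path and so contains an arc-cut, whose value is at most $\sum_{e \in S'} c(e)$, the length of $C$. Combining the two inequalities yields the claim; one may also invoke Theorem \ref{thm:mfmc} to phrase everything through the maximum flow value.

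The value bookkeeping is immediate from the fact that the length of $D(e)$ is $c(e)$. The step I expect to be the main obstacle is the orientation analysis: one must verify carefully that the clockwise convention for dualizing arcs turns the boundary of the $s$-side into a cut-cycle that is counterclockwise (and not clockwise) around $s$, and dually that an $s$-to-$t$ path always crosses a counterclockwise cut-cycle through a forward arc. This orientation argument is also what ensures that a \emph{minimal} arc-cut corresponds to a \emph{simple} cut-cycle rather than a more complicated closed walk (minimality trims $B$ so that its boundary is a single cycle), and it is precisely the content of Johnson's analysis~\cite{J87}.
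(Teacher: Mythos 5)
First, a framing remark: the paper does not prove this lemma at all --- it is quoted from Johnson~\cite{J87} --- so your attempt must stand on its own as a reconstruction of that classical argument. Your second direction (cut-cycle $\Rightarrow$ arc-cut) is essentially sound, and is in fact easier than you make it: any closed curve formed by dual arcs that separates the face of $s$ from the face of $t$ meets the primal graph only in interior points of arcs $e$ with $D(e)\in C$, so every $s$-to-$t$ path must contain such an arc, and the crossed arcs therefore contain a cut of value at most the length of $C$; no orientation analysis is needed for this inequality. Your reduction of a minimal cut to the form ``all arcs from $A$ to $B$'' is also fine.

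The genuine gap is in the first direction. You assert that ``the arcs crossing the $A/B$ boundary are precisely those dualized by $\{D(e):e\in S\}$,'' and this is false for directed graphs: the boundary of the region containing $A$ may also be crossed by arcs directed from $B$ \emph{into} $A$, which by your own minimality argument are \emph{not} in $S$. For example, take $s\to u\to t$ together with an arc $v\to u$, with $c((s,u))=3$, $c((u,t))=2$, $c((v,u))=5$; the minimum cut is $S=\{(u,t)\}$, $A=\{s,u\}$, and the arc $(v,u)$ crosses the $A/B$ boundary. Consequently $\{D(e):e\in S\}$ need not close up into a directed cycle, and the problem is not cosmetic: since the paper's $G_d$ has exactly one dual arc per primal arc, the only dual arc crossing such a backward edge points the \emph{wrong} way along the boundary. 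Adding an antiparallel pair $u\to x$, $x\to u$ on the $s$--$t$ path (with $c((u,x))=1$ small and all other capacities large) even yields a graph in which the dual vertex of the bigon face is a sink of $G_d$, so no cut-cycle of length equal to the minimum cut exists at all. The missing idea --- which is how Johnson's proof works, and which this paper itself invokes for the flow side in Section~\ref{sec:rcyc} --- is to first add $rev(e)$ with capacity $0$, embedded along the same curve as $e$, for every arc lacking a reverse; then the boundary walk of $A$ crosses each $A\to B$ arc via $D(e)$, paying $c(e)$, and each $B\to A$ arc via the zero-length $D(rev(e))$, producing a cut-cycle of length exactly $c(S)$. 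Without this device your first inequality simply fails. Finally, your parenthetical orientation claim is asserted rather than proved, and it is exactly the delicate point you yourself flagged: with the paper's stated convention ($D(e)$ directed from the left face to the right face, i.e.\ $e$ rotated clockwise), the duals of arcs leaving the $s$-side trace a curve that is \emph{clockwise} around $s$, so the conventions must be pinned down carefully (they cannot be gotten right by assertion) before the construction matches the definition of a cut-cycle.
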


Ford and Fulkerson~\cite[Chapter I.11]{FF62} extended the definition of cuts to graphs
with vertex capacities. In such a graph, a cut $S$ is a minimal subset
of $E \cup V$ such that every path from $s$ to $t$ contains an arc or
a vertex in $S$. The value of a cut $S$ is similarly defined as
$\sum_{x \in S} c(x)$. Ford and Fulkerson also presented a version of the Max-Flow Min-Cut Theorem
for graphs with vertex capacities, in this case the value of maximal flow (subject to both arc and
vertex capacities) is equal to the value of the minimum cut (which contains both arcs and vertices).

Khuller and Naor~\cite{KN94} extended Lemma \ref{lem:mcc}
using a supergraph $G_c$ of $G_d$
which they construct as follows.
Let $h$ be a face of $G_d$ that corresponds to a vertex $v$ of $G$ with finite
capacity. We add
 a new vertex $v_h$ inside $h$
 and connect it by an (undirected) edge of length $c(v)/2$
 to every vertex on the boundary of $h$. See Figure \ref{fig:Gc}.

\begin{figure}
    \centering
    \includegraphics{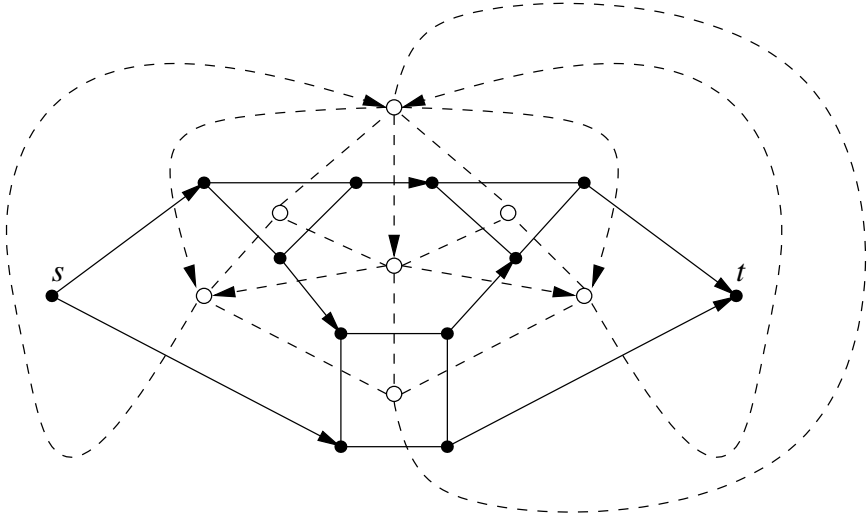}
    \caption{Construction of $G_c$ and $G_e$ for the graph in Figure \ref{fig:Gd}.
    The graph $G_c$ is presented as the dual graph of $G_e$. The newly added (undirected) edges are without arrowheads. Capacities are not shown in this figure.}
    \label{fig:Gc}
\end{figure}

\begin{lemma} \cite{KN94} \label{lem:Gc}
    The values of the maximum flow and minimum cut in $G$ are
equal to the length of the shortest cut-cycle in $G_c$.
\end {lemma}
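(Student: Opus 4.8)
The plan is to combine the vertex-capacitated min-cut theorem with a length-preserving correspondence between cuts of $G$ and cut-cycles of $G_c$. By the Max-Flow Min-Cut Theorem of Ford and Fulkerson~\cite{FF62} for vertex capacities (stated above), the value of the maximum flow in $G$ equals the value of the minimum cut $S \subseteq E \cup V$. Hence it suffices to prove that the value of the minimum cut of $G$ equals the length of the shortest cut-cycle in $G_c$, since then both the maximum flow and the minimum cut equal the shortest cut-cycle. I would establish the remaining equality through the usual planar-duality picture, thickened to account for vertices: imagine each vertex $v$ of $G$ as a small disk and each arc as a thin channel joining two disks, so that every $s$--$t$ path runs through an alternating sequence of disks and channels. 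A cut-cycle of $G_c$ is a closed curve separating the faces corresponding to $s$ and $t$; each of its edges either lies on the boundary of a face of $G_d$, namely a dual arc $D(e)$ that crosses the channel of $e$, or runs through the interior of such a face $h$, namely a spoke of the hub $v_h$, which lies inside the disk of the vertex $v$ corresponding to $h$.

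For the direction shortest cut-cycle $\le$ min cut, I would start from a minimum cut $S$ and build a separating closed curve that crosses exactly the channels of the arcs in $S \cap E$ and passes through exactly the disks of the vertices in $S \cap V$. Crossing the channel of $e$ is realized by the dual arc $D(e)$, of length $c(e)$. Passing through the disk of a finite-capacity vertex $v$ is realized by entering and leaving the hub $v_h$ along two spokes, of total length $c(v)/2 + c(v)/2 = c(v)$; this is exactly why the spokes are assigned length $c(v)/2$. Since $S$ disconnects $s$ from $t$, the resulting curve separates the faces corresponding to $s$ and $t$ and can be oriented counterclockwise around $s$, so it is a cut-cycle whose length is $\sum_{e \in S \cap E} c(e) + \sum_{v \in S \cap V} c(v)$, the value of $S$.

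For the converse, min cut $\le$ shortest cut-cycle, I would take a shortest cut-cycle $C$, which may be assumed simple, and read off a candidate cut $S$: put $e \in S$ whenever $D(e) \in C$, and put $v \in S$ whenever the hub $v_h$ lies on $C$. Because $C$ is simple it visits each hub at most once and therefore uses either zero or exactly two spokes at $v_h$; in the latter case these two spokes contribute precisely $c(v)$ to the length of $C$, matching the cost of placing $v$ in $S$, so the value of $S$ equals the length of $C$. It remains to verify that $S$ is a cut: since $C$ separates the faces corresponding to $s$ and $t$, every $s$--$t$ path must cross $C$, and a crossing occurs either where the path traverses the channel of some arc $e$ with $D(e) \in C$, placing $e \in S$ on the path, or while the path lies inside the disk of some vertex $v$ whose hub lies on $C$, placing $v \in S$ on the path. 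Thus every $s$--$t$ path meets $S$, so $S$ is a cut and the minimum cut is at most the shortest cut-cycle.

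The main obstacle, and the step that needs the most care, is making the thickened-duality correspondence rigorous, in particular checking in the converse direction that the intersections of an arbitrary simple cut-cycle with the disks and channels are captured exactly by its hub-visits and its dual-arc edges, with no stray crossings, and that any portion of the cut-cycle interior to a face $h$ must route through the hub $v_h$, since the only edges of $G_c$ inside $h$ are its spokes. One must also confirm the clean bookkeeping at the hubs, namely that a shortest cut-cycle never enters a single hub more than once and never uses an odd number of spokes there, so that each traversed vertex contributes exactly $c(v)$ rather than some other multiple of $c(v)/2$; this is what ties the factor of two in the spoke lengths to the vertex capacities and makes both inequalities tight.
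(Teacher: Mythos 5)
First, a structural remark: the paper itself gives no proof of this lemma --- it is imported verbatim, with a citation, from Khuller and Naor \cite{KN94} --- so there is no internal argument to compare against, and your proposal has to stand on its own as a reconstruction of the proof in \cite{KN94}. Your skeleton is the right one and is essentially the intended argument: use the vertex-capacitated Max-Flow Min-Cut Theorem of \cite{FF62} (which the paper states just before the lemma) to reduce everything to ``minimum cut in $G$ equals shortest cut-cycle in $G_c$,'' and then prove the two inequalities via the cut/cut-cycle correspondence, with the key bookkeeping that a visit of the cycle to a hub $v_h$ uses exactly two spokes of length $c(v)/2$ and hence costs exactly $c(v)$. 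Your cycle-to-cut direction is sound as sketched: a simple shortest cut-cycle visits each hub at most once, any portion of it inside a face $h$ of $G_d$ must route through $v_h$, and every crossing of an $s$--$t$ path with the cycle happens either in the channel of an arc whose dual lies on the cycle or inside the disk of a vertex whose hub lies on the cycle.

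The genuine gap is in the cut-to-cycle direction, and it is precisely where directedness --- the whole point of this paper --- enters; your argument is really an argument for undirected graphs. You require a closed curve that ``crosses exactly the channels of the arcs in $S \cap E$'' and the disks of $S \cap V$. In a directed graph this is in general topologically impossible: the boundary of the region reachable from $s$ without meeting $S$ is also crossed by arcs directed from the $t$-side \emph{into} the $s$-side, and such backward arcs need not belong to a minimal cut $S$, yet any separating curve must cross them. Moreover, since each dual arc $D(e)$ is directed (left face to right face), a directed cycle in $G_c$ cannot cross a backward arc at all unless a reverse dual arc is available, and it also cannot simply ``be oriented counterclockwise around $s$'' at will --- one must check that the duals of the forward arcs of $S$ are consistently oriented. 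Concretely, take the path $s \to a \to b \to t$ with capacities $10, 1, 10$ and add the arc $(b,a)$ with capacity $100$, embedded so that $(a,b)$ and $(b,a)$ bound a bigon face $B$: both $D((a,b))$ and $D((b,a))$ are then directed into $D(B)$, no directed cycle can cross between $a$ and $b$, and the shortest cut-cycle has length $10$ while the minimum cut $\{(a,b)\}$ has value $1$. The statement (and your construction) is only correct under the convention that every arc $e$ has $rev(e)$ present (with capacity $0$ if absent) and embedded on the same curve, so that $D(rev(e)) = rev(D(e))$ and backward crossings are made along zero-length reverse duals --- this is exactly the convention the paper sets up in Section~\ref{sec:rcyc}. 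You flag the hub bookkeeping and stray crossings as the delicate points, but those are fine; the orientation/backward-arc issue is the one that breaks the written argument, and arc directions are never mentioned in your proof.
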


\section{The extended graph} \label{sec:ext}

 Zhang, Liang and Jiang~\cite{ZLJ06} and Zhang, Liang and Chen~\cite{ZLC08} construct the \emph{extended graph} for an undirected graph
with vertex capacities. We use the same construction for directed
planar graphs with vertex capacities.
The extended graph is defined as follows.
We replace every vertex
$v \in V$ which has a
 finite capacity with $d$ vertices $v_0, \cdots, v_{d-1}$, where $d = |in(v)| + |out(v)|$
is the degree of $v$. We connect every $v_i$ to $v_{(i + 1)\mod d}$ with an (undirected)
edge of capacity of $c(v)/2$.
We make every arc that was adjacent
to $v$, adjacent to some vertex $v_i$ instead, such that each arc is
connected to a different vertex $v_i$, and the clockwise order of the arcs
is preserved. We identify the new arc $(u, v_i)$ or $(v_i, u)$ with
the original arc $(u, v)$ or $(v, u)$. The resulting graph is denoted
by $G_e$, and the cycle that replaces $v \in V$ in $G_e$ by $C_v$.
 The graph $G_e$ is a simple directed planar graph
without vertex  capacities.
The arc set of $G_e$ contains the arc set of $G$.
See Figure \ref{fig:Gc}.

From the construction of $G_e$ and $G_c$ follows that $G_c$ is the
dual of $G_e$. Let $v$ be a vertex with finite capacity and let $h$
be the corresponding face in $G_d$. Then, in $G_e$ we replaced $v$
with $C_v$, and in $G_c$ we placed $v_h$ inside $h$. The edges which
connects $v_h$ to the boundary of $h$ are dual to the edges of $C_v$.

Combining Theorem \ref{thm:mfmc}, Lemma \ref{lem:mcc}
and Lemma \ref{lem:Gc} we get that the value of the maximum flow in
$G_e$ is the same as the value of minimum arc-cut in $G_e$, which is
the same as the value of the shortest cut-cycle in $G_c$, which equals
to the value of the maximum flow in $G$. And the next lemma follows.
\begin{lemma} \label{lem:Ge}
    The value of the maximum flow of $G$ is equal to the value of the maximum flow of $G_e$.
\end{lemma}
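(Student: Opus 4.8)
The plan is to prove the equality of \emph{values} only, which is all the statement asks for, and to do so without ever transforming an actual flow between $G$ and $G_e$. Instead I would route the equality through a chain of four quantities, each equal to the next: the value of the maximum flow in $G_e$, the value of the minimum arc-cut in $G_e$, the length of the shortest cut-cycle in $G_c$, and the value of the maximum flow in $G$. Since $G_e$ carries no vertex capacities, the first three links are supplied by the cited min-cut and cut-cycle dualities, while the last link comes directly from Lemma \ref{lem:Gc}.

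Concretely, I would proceed in three steps. First, because $G_e$ is a directed planar graph with arc capacities only, Theorem \ref{thm:mfmc} applies and equates the value of the maximum flow in $G_e$ with the value of its minimum arc-cut. Second, Lemma \ref{lem:mcc} applied to $G_e$ equates its minimum arc-cut value with the length of the shortest cut-cycle in the dual of $G_e$; since the construction section identifies this dual as precisely $G_c$, the relevant cut-cycle lives in $G_c$. Third, Lemma \ref{lem:Gc} equates the length of the shortest cut-cycle in $G_c$ with the value of the maximum flow (and of the minimum cut) in $G$. Composing these three equalities gives the claimed identity.

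The proof therefore has no hard obstacle of its own; its content is a formal composition of prior results, and the only thing demanding care is confirming that $G_c$ is genuinely the dual of $G_e$, weights included. Here I would check the weight correspondence explicitly: each edge of the cycle $C_v$ replacing a finite-capacity vertex $v$ in $G_e$ has capacity $c(v)/2$, and the edges joining $v_h$ to the boundary of the face of $v$ in $G_c$ have length $c(v)/2$, and these two edge sets are dual, so arc capacities in $G_e$ correspond to arc lengths in $G_c$ exactly as Lemma \ref{lem:mcc} requires. I would likewise confirm at each use that the hypotheses hold: planarity and directedness survive the construction, and the essential assumption of \emph{no} vertex capacities holds for $G_e$ by construction, which is what permits Theorem \ref{thm:mfmc} and Lemma \ref{lem:mcc} to be applied to $G_e$ even though $G$ itself has vertex capacities.
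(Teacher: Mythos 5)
Your proposal is correct and follows exactly the paper's own argument: the paper proves Lemma \ref{lem:Ge} by composing Theorem \ref{thm:mfmc} and Lemma \ref{lem:mcc} applied to $G_e$ with Lemma \ref{lem:Gc}, after first noting (as you verify) that $G_c$ is the dual of $G_e$ with the edges of each $C_v$ dual to the length-$c(v)/2$ edges incident to $v_h$. Your explicit check of the capacity/length correspondence is the same observation the paper makes in the paragraph preceding the lemma.
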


\section{Reduction from the extended graph to the original graph}
\label{sec:acy}
Let $f$ be a flow in $G_e$,
we define $\hat{f}$
to be the restriction of $f$ to the arcs of $G$.
 The next
lemma generalizes and corrects the result of Zhang et
al.~\cite[Theorem 3]{ZLC08}.

\begin{lemma} \label{lem:acyc}
    Let $f$ be a flow function in $G_e$. If $\hat{f}$ is acyclic then $\hat{f}$
is a flow function in $G$.
\end{lemma}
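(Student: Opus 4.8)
The plan is to verify that $\hat{f}$ satisfies all three flow constraints of a flow function in $G$, given that $f$ is a flow in $G_e$ and $\hat{f}$ is acyclic. The arc capacity constraints \eqref{con:ec} are immediate: since the arc set of $G_e$ contains the arc set of $G$, and each such arc keeps its original capacity, we have $0 \le \hat{f}(e) = f(e) \le c(e)$ for every $e \in E$. The real content is in the vertex capacity constraints \eqref{con:vc} and the flow conservation constraints \eqref{con:0}, both of which must be recovered at each original vertex $v$ from the behavior of $f$ on the replacement cycle $C_v$.

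First I would fix a vertex $v$ with finite capacity and analyze the cycle $C_v = (v_0, \dots, v_{d-1})$ that replaces it in $G_e$. Each original arc incident to $v$ is now attached to a distinct vertex $v_i$, so the flow $\hat{f}$ entering or leaving $v$ in $G$ corresponds exactly to the flow that $f$ pushes onto $C_v$ from the outside through these attachment arcs. Applying flow conservation of $f$ at each $v_i$ individually, the net external flow entering $v_i$ must equal the net flow leaving $v_i$ along the two cycle edges $v_{i-1}v_i$ and $v_i v_{(i+1)\bmod d}$. Summing these local conservation equations around the whole cycle, the internal cycle-edge contributions telescope and cancel, leaving exactly $\sum_{e \in in(v)} \hat{f}(e) = \sum_{e \in out(v)} \hat{f}(e)$, which is \eqref{con:0} for $v$. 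For vertices of infinite capacity, which are unchanged by the construction, conservation is inherited directly from $f$.

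The harder constraint is the vertex capacity bound \eqref{con:vc}, and this is precisely where acyclicity of $\hat{f}$ must be used — without it the statement is false, which is why this lemma \emph{corrects} \cite{ZLC08}. The idea is that the total flow $\sum_{e \in in(v)} \hat{f}(e)$ routed through $v$ must be carried across the cycle $C_v$, and each cycle edge has capacity $c(v)/2$. If the flow around $C_v$ were allowed to circulate, one could route arbitrarily much through $v$ while keeping each undirected cycle edge within its capacity by sending flow the ``long way'' around; the acyclicity of $\hat{f}$ is what forbids this and forces the flow through $v$ to be split across the cut consisting of the two halves of $C_v$. I expect the main obstacle to be making this argument rigorous: I would show that since $\hat{f}$ has no flow-cycle, the flow on $C_v$ cannot wind all the way around, so there is some cycle edge of $C_v$ carrying no flow in at least one direction; then the incoming flow to $v$ must cross the cycle and is bounded by the total capacity available, namely $c(v)/2 + c(v)/2 = c(v)$. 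Formalizing ``no full winding'' — likely by identifying a ``gap'' edge on $C_v$ and arguing that all external inflow must pass through at most the two edges adjacent to that gap, or by a counting/telescoping argument on partial sums of edge flows around the cycle — is the delicate step, and I would expect the author to invoke acyclicity exactly here to pin down the required edge and complete the bound.
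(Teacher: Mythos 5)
Your treatment of the arc capacity constraints and the conservation constraints matches the paper's proof. But the heart of the lemma is the vertex capacity constraint \eqref{con:vc}, and there your sketch has a genuine gap---in fact two. First, acyclicity of $\hat{f}$ is a statement about arcs of $G$ only; the edges of $C_v$ are not arcs of $G$, so ``the flow on $C_v$ cannot wind all the way around'' does not follow from acyclicity of $\hat{f}$ (a circulation around $C_v$ is invisible to $\hat{f}$). Second, and more importantly, even granting a ``gap'' edge of $C_v$ carrying no flow, the bound you want does not follow from any local counting on $C_v$. Consider $d=6$ with the external arcs attached to $C_v$ in alternating order: in at $v_0$, out at $v_1$, in at $v_2$, out at $v_3$, in at $v_4$, out at $v_5$. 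Route $c(v)/2$ units in at $v_0$, along the single edge $v_0v_1$, and out at $v_1$; likewise $v_2$ to $v_3$ and $v_4$ to $v_5$. Conservation holds at every $v_i$, every edge of $C_v$ is within its capacity $c(v)/2$, three edges of $C_v$ are gap edges carrying no flow at all, and yet the total flow through $v$ is $3c(v)/2 > c(v)$. So the existence of a gap edge, or of any number of gap edges, cannot yield the bound; what must be ruled out is the \emph{interleaving} of entry and exit points around $C_v$.

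That is precisely the step the paper proves and your proposal is missing: using acyclicity of $\hat{f}$ together with planarity, one shows that among the arcs of $G$ incident to $v$ carrying positive flow, those carrying flow into $v$ are consecutive in the cyclic order around $v$, and those carrying flow out of $v$ are consecutive. The argument is global, not local to $C_v$: given two flow-carrying arcs $e, e'$ into $v$, extend them backwards to flow paths $P, P'$ from $s$, let $u$ be their last common vertex before $v$ and $Q, Q'$ the suffixes from $u$ to $v$ (acyclicity makes $u$ and these suffixes well defined); $Q$ and $Q'$ partition the plane into two regions $H$ and $H'$, with $t$ in, say, $H'$, and any flow-carrying arc leaving $v$ into $H$ would start a flow path to $t$ that must cross $Q$ or $Q'$ at a vertex, creating a flow-cycle in $\hat{f}$---a contradiction. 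Only after this consecutiveness is established do the two edges $b, b'$ of $C_v$ exist that separate the incoming side from the outgoing side, forcing all flow entering $C_v$ to cross a cut of total capacity $c(v)/2 + c(v)/2 = c(v)$. Your proof would need to supply this planar/topological argument; the gap-edge or telescoping counting you propose cannot replace it.
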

\begin{proof}
We show that $\hat{f}$ satisfies all three
conditions that a flow function in $G$ should satisfy.

Since the capacities of common arcs of $G$ and $G_e$  are  the same,
$\hat{f}$ clearly satisfies arc capacity constraints.

Let $v \in V$. If $c(v) = \infty$ then $v$ is also a vertex of $G_e$, with the same incident arcs,
and therefore $\hat{f}$ satisfies the flow conservation constraint at
$v$. Otherwise,  the amount of flow that enters $C_v$ in $f$,
is the same amount that enters $v$ in $\hat{f}$. Also, the amount
of flow that leaves $C_v$ in $f$ is the same amount that leaves
$v$ in $\hat{f}$. Therefore we obtain that $\hat{f}$ satisfies the flow
conservation constraint at $v$ by summing up the flow conservation
constraints that $f$ satisfies for the vertices in $C_v$.

It  is left to show that $\hat{f}$ satisfies the vertex capacities
constraints. (Note that these constraints are irrelevant for $f$
since in $G_e$ we do not have vertex capacities.) Let $v \in V$ be a
vertex with finite capacity.  First, we show that the acyclicity of
$\hat{f}$ implies that in the cyclic order around $v$ of arcs with positive
flow that are incident to $v$ in $G$, the arcs carrying flow into
$v$ are consecutive, and the arcs carrying flow out of $v$ are
consecutive. Note that this claimed consecutiveness is restricted to
arcs with positive flow, so arcs $e$ with $\hat{f}(e) = 0$ can appear
anywhere in the cyclic order of the arcs incident to $v$.

If there is only one arc carrying
 flow into $v$ then the
claim is trivial. Otherwise, consider two arcs $e$ and $e'$ carrying
flow  into $v$. We  show that if we cyclically traverse the arcs
incident to $v$ clockwise starting from $e$, we either traverse all
arcs carrying flow out of $v$ before traversing $e'$, or we traverse
them all following $e'$ but before we get back to $e$.
 Since this hold for every pair of arcs $e$ and $e'$
carrying flow into $v$, the desired consecutiveness follows.

Since $e$ carries flow,
there is a path $P$ from $s$ to $v$ of arcs with positive flow that
ends with $e$. Similarly, there is a path $P'$ of arcs with positive
flow from $s$ to $v$ that ends with $e'$.

Let $u$ be the last vertex of $P$, before $v$, that also
appears on $P'$. The vertex $u$ must also be the last vertex on $P'$,
before $v$,
that also appears on $P$, as otherwise we get that $\hat{f}$ contains a
flow-cycle.
Let $Q$ be the suffix of $P$ that starts at the arc of $P$ that
goes out of $u$, and let $Q'$ be the suffix of $P'$ that starts at
the arc of $P'$ that goes out of $u$. (These arcs are uniquely
defined since $\hat{f}$ does not contain a flow-cycle.) Since both
$Q$ and $Q'$ goes from $u$ to $v$ the arcs of $Q$ and $Q'$ partition
the plane into two regions, denote them by $H$ and $H'$.

The vertex $v$ is obviously not $t$ since $v$ has a finite capacity.
Furthermore, since there is an arc with positive flow outgoing of
each vertex along $P$ and $P'$, none of these vertices can be $t$. Therefore $t$ is
either inside $H$ or inside $H'$. We assume without loss of generality that
$t$ is in $H'$.

 Let $d =
(v, w)$ be an arc that carries flow out of $v$, such that $w \in
H$. There must be a path $R$ that starts with $d$ and carries flow
from $v$ to $t$. See Figure \ref{fig:H}. Since $w$ is in $H$ and
$t$ is in $H'$, we get that there exists a flow-cycle that starts
with a prefix of $R$ and ends with a suffix of $Q$ or of $Q'$,
contradicting  the assumption that $\hat{f}$ is acyclic. Therefore,
all arcs that carry flow out of $v$, carry it to a vertex in $H'$.
Our claim that the arcs carrying flow into
$v$ are consecutive and the arcs carrying flow out of $v$ are
consecutive then follows.

\begin{figure}
    \centering
    \includegraphics{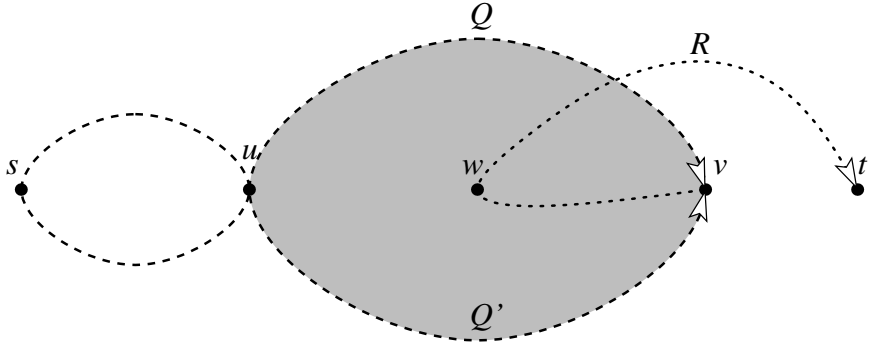}
    \caption{The path $R$ from $v$ to $t$ must cross one of the paths $Q$ or $Q'$ from $u$ to $v$, and thus creates a flow-cycle. The shaded area is $H$.}
    \label{fig:H}
\end{figure}

An arcs $e$ with $\hat{f}(e) > 0$ incident to $v$ in $G$ corresponds to the same
arc $e$ with $f(e) > 0$ incident to a vertex in the cycle $C_v$ in
$G_e$. Therefore among the arcs incident to $C_v$ with $f(e) > 0 $ those
that carry flow into vertices in $C_v$ are consecutive.
 Therefore,
we can find two edges $b$ and $b'$ of $C_v$ such that if we remove
them $C_v$ splits into two parts, such that all arcs that carry flow
into vertices in $C_v$ are incident to one of these parts, and all
arcs that carry flow out of vertices in $C_v$ are incident to the
other part.
 See
Figure \ref{fig:bb}. All the flow which enters $C_v$ must go through
$b$ and $b'$, in order to leave $C_v$, because of the flow conservation
constraints on the vertices of $C_v$.
The total capacity of $b$ and $b'$ is $c(v)$,
and therefore the total flow that enters $C_v$ in $f$ is at most
$c(v)$. Thus it follows that the total flow that enters $v$ in
$\hat{f}$ is also at most $c(v)$, and the vertex capacity constraint
at $v$ holds.
\end{proof}

\begin{figure}
    \centering
    \includegraphics{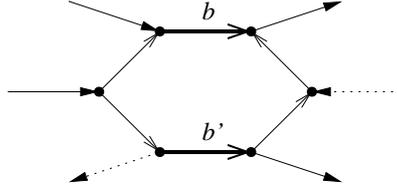}
    \caption{The edges $b$ and $b'$ separate between the incoming flow into $C_v$ and the outgoing flow from this cycle. Dotted arcs do not carry flow. The arrowheads of edges of $C_v$ indicate direction of flow.}
    \label{fig:bb}
\end{figure}

\section{Canceling flow-cycles} \label{sec:canc}

In order to use Lemma \ref{lem:acyc} we must find a maximum flow $f$ in $G_e$
such that $\hat{f}$ is acyclic. In this section we show how to do
that.

The algorithm of Zhang et al.~\cite[Section 3]{ZLC08} for undirected planar
graphs finds a flow $f$ in $G_e$ and than cancels flow-cycles  in
$\hat{f}$ in an arbitrary order. They call the resulting flow $f_a$
and claim that this flow satisfies vertex capacities constraints.
This approach is flawed.  Figure \ref{fig:counter} shows an example
on which the algorithm of \cite{ZLC08} fails. After we cancel
flow-cycles in $\hat{f}$ in an arbitrary order it is possible that
there is no flow $f'$ in $G_e$ whose restriction to $G$ is $\hat{f'} = f_a$,
and  therefore Lemma \ref{lem:acyc} does not apply.

\begin{figure}
    \centering
    \includegraphics{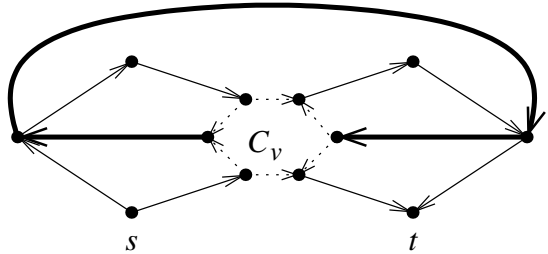}
    \caption{A counterexample to the algorithm of \cite{ZLC08}.
    The edges of the original undirected graph $G$ are solid. The vertex $v$ has capacity $1$, the edges of $C_v$ are dotted.
     The flow in every solid edge is $1$, the flow in every dotted edges is $1/2$, in the specified direction (the edges are undirected). The bold edges form a flow-cycle in $G$, after we cancel it we remain with an acyclic flow in $G$, but the amount of flow that enters $v$ is $2$. The correct solution is to cancel the flow in the two internal flow-cycles.}
    \label{fig:counter}
\end{figure}

As the example in Figure \ref{fig:counter} shows, it is not enough
to cancel arbitrary flow-cycles in $\hat{f}$. We can cancel a flow-cycle
in $\hat{f}$ only if there is a cycle $C$ in $G_e$ that contains it, such that
we can reduce flow along the cycle $C$. In this case the cycle $rev(C)$ in $G_e$
is a residual cycle with respect to $f$. Therefore, in order to cancel a
flow-cycle in $G$ with respect to $\hat{f}$ we must cancel a residual cycle in
$G_e$ with respect to $f$. Canceling a arbitrary residual cycle is not enough,
since we always want to reduce the flow that $\hat{f}$ assigns to arcs, and never
to increase it.

Let $f$ be a flow in $G_e$. We define a new capacity function $c'$ on the
arcs of $G_e$ which guarantees that the flow in arcs of $G$ never increases
beyond $\hat{f}$.
For $e \in E$ we let $c'(e) = f(e)$. The arcs of $G_e$ which are not in $G$
are arcs of $C_v$ for some vertex $v$, for these arc we do not have to limit
the flow to the amount in $f$, so we set $c'(e) = c(e) = c(v)/2$. The flow
function $f$ is also a flow function in $G_e$ with the new capacity $c'$, by
the way we defined $c'$. Since $c'(e) \leq c(e)$ for every arc $e$, every
flow in $G_e$ with capacity $c'$ is also a flow in $G_e$ with the original
capacity $c$.

Instead of canceling the residual cycles one by one,
we apply to $G_e$ and $c'$ the algorithm in Section \ref{sec:rcyc} and find a
new flow $f'$ with the same value as $f$, such that there are no clockwise
residual cycles in $G_e$ with respect to $f'$ and $c'$.
The following lemma shows the crucial property of $f'$.

\begin{lemma}
The restriction
$\hat{f'}$ of $f'$ to $G$ does not contain counterclockwise flow-cycles.
\end{lemma}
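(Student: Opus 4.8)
The plan is to argue by contradiction. Suppose $\hat{f'}$ contains a counterclockwise flow-cycle $C=(e_0,\dots,e_{k-1})$ in $G$, so that $f'(e_i)>0$ for every $i$. I will produce a clockwise residual cycle in $G_e$ with respect to $f'$ and $c'$, contradicting the way $f'$ was obtained in Section~\ref{sec:rcyc}. The first observation is that reversing $C$ turns each of its arcs into a residual arc: since we may assume that $f'(e_i)>0$ implies $f'(rev(e_i))=0$, the residual capacity of $rev(e_i)$ is
\[c_r(rev(e_i))=c'(rev(e_i))-f'(rev(e_i))+f'(e_i)=c'(rev(e_i))+f'(e_i)\ge f'(e_i)>0.\]
Moreover, reversing every arc of the counterclockwise cycle $C$ yields a clockwise cycle $rev(C)$ in $G$. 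Note that $C$ avoids $s$ and $t$, since a cycle cannot pass through the source (no incoming arcs) or the sink (no outgoing arcs), so every vertex it visits is either of infinite capacity, where it is unchanged in $G_e$, or of finite capacity.

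The difficulty is that $rev(C)$ is a cycle of $G$ but not yet a cycle of $G_e$: at each finite-capacity vertex $v$ that $C$ visits, the incoming and outgoing arcs of $rev(C)$ attach to two distinct vertices, say $v_{out}$ and $v_{in}$, of the cycle $C_v$ in $G_e$. I would bridge this gap by inserting, for each such $v$, a path along $C_v$ from $v_{out}$ to $v_{in}$, producing a closed walk $W$ in $G_e$ that reuses the residual arcs $rev(e_i)$ together with these bridging paths. Each edge of $C_v$ is undirected and retains capacity $c(v)/2$ under $c'$, so for every such edge the direction opposite to its flow always has positive residual capacity; hence each $C_v$ offers a residual path in at least one of its two rotational directions. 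I would then extract from $W$ a simple clockwise residual cycle, which is the desired contradiction.

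The main obstacle is to make the two requirements on the bridges compatible: each bridging path must consist of residual arcs, and the resulting closed walk $W$ must retain the clockwise winding inherited from $rev(C)$, so that extracting a simple cycle yields one that is clockwise rather than counterclockwise. The subtlety is that routing a bridge the ``wrong'' way around $C_v$ --- the direction forced by residuality when some edges happen to be saturated in the rotational direction we would otherwise prefer --- encloses the face bounded by $C_v$ and could, a priori, cancel part of the clockwise winding. I expect to resolve this by analyzing the pattern of $f'$ on each $C_v$: because $f'$ has no clockwise residual cycle, the clockwise traversal of $C_v$ must contain a saturated arc, and this constrains how the flow can cross $C_v$ and pins down a residual bridge from $v_{out}$ to $v_{in}$ that respects the clockwise orientation of $W$. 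Handling this interaction between the local flow on the small cycles $C_v$ and the global clockwise orientation of $W$ is where the real work lies; the arc-capacity and conservation bookkeeping needed to certify that the extracted cycle is residual is then routine.
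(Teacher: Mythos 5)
Your setup coincides with the paper's: assume a counterclockwise flow-cycle $C$ of $\hat{f'}$, observe that reversing its arcs gives residual arcs, and try to close $rev(C)$ into a clockwise residual cycle of $G_e$ by bridging across each blown-up cycle $C_v$ — and you correctly identify that the entire difficulty is certifying that a bridge is residual. But the proposal stops exactly there, and the two tools you offer for finishing do not work. First, the claim that ``each $C_v$ offers a residual path in at least one of its two rotational directions'' is a non sequitur from the edge-wise observation preceding it, and it is false in general: flow can enter and leave $C_v$ at several vertices, so different edges of $C_v$ can be saturated in opposite rotational senses, blocking \emph{both} paths between your $v_{out}$ and $v_{in}$ (the paper's $v_j$ and $v_i$) simultaneously. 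This is precisely the situation at vertex $v$ in the counterexample of Figure~\ref{fig:counter}: all four edges of $C_v$ are saturated, the outer flow-cycle cannot be canceled at all, and indeed if some residual bridge always existed the arbitrary-cancellation algorithm of \cite{ZLC08} would essentially be correct. Second, knowing that the clockwise traversal of $C_v$ contains a saturated edge (because $f'$ has no clockwise residual cycles) says nothing about where that edge lies relative to $v_i$ and $v_j$, so it cannot pin down a residual bridge either.

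The missing idea is to choose $C$ \emph{innermost}: a counterclockwise flow-cycle containing no other counterclockwise flow-cycle inside its embedding. With that choice the paper takes, at each finite-capacity vertex $v$ on $C$, the bridge $P$ from $v_j$ to $v_i$ going counterclockwise around $C_v$ — the side of $C_v$ whose vertices carry the original arcs pointing into the region enclosed by $C$ — and proves $P$ residual by backward induction. The last edge $(v_{i-1},v_i)$ is residual, since otherwise conservation at $v_i$ would force more than $c(v)/2$ units of flow through the edge $(v_i,v_{i+1})$, whose capacity is only $c(v)/2$. For an earlier bridge vertex $v_k$, its unique original arc $e'$ cannot carry flow out of $v_k$: that arc points inside $C$ while $t$ lies on the outer face, so a flow path starting with $e'$ would have to return to $C$ and create a counterclockwise flow-cycle strictly inside $C$, contradicting innermostness; conservation at $v_k$ then gives $f'(e_k) \le f'(e_{k+1}) < c(v)/2$. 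Without the innermost choice there is nothing for this step to contradict, and no local analysis of $f'$ on $C_v$ can replace it, as the counterexample shows. Incidentally, your worry about the bridge ``canceling the clockwise winding'' is a non-issue: since $C$ is simple and each bridge meets the rest of the cycle only at $v_i$ and $v_j$, either bridge yields a simple cycle with the same orientation as $rev(C)$; the reason the interior-side bridge must be used is the conservation argument above, not the winding.
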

\begin{proof}
  Assume, for a contradiction, that there is a counterclockwise flow-cycle $C$
  with respect to $\hat{f'}$ in $G$. We choose $C$ such that $C$ does
  not contain any other counterclockwise flow-cycle inside its embedding in
  the plane. We show that we can extend $rev(C)$ to a clockwise residual cycle
  with respect to $f'$ in the graph $G_e$ with capacity $c'$,
in contradiction to the way we constructed $f'$.

For every arc $e$ of $C$, $f'(e) > 0$, and therefore $rev(e)$ is a residual arc
with respect to $f'$ and $c'$. If $C$ does not contain a vertex $v \in V$ with
$c(v) \neq \infty$ then $rev(C)$ is a clockwise residual cycle with respect to
$f'$ and $c'$, and we obtain a contradiction.

Let $v$ be a vertex in $C$ with $c(v) \neq
\infty$. Let $(u, v)$ and $(v, u')$ be the arcs of $C$ which are incident to $v$.
These arcs correspond to arcs $(u, v_i)$ and $(v_j, u')$ in $G_e$,
where $v_i$ and $v_j$ are in $C_v$.
Let $P$ be the path from $v_j$ to $v_i$ which goes counterclockwise around
$C_v$ (recall that $C_v$ is undirected in $G_e$).
To show a complete residual cycle in $G_e$, we argue that $P$ is a residual path
in $G_e$ with respect to $f'$ and $c'$, so we can use it to fill the gap between
$v_j$ and $v_i$ in $rev(C)$.
An arc $e$ of $P$ is not residual if
and only if $f'(e) = c'(e) = c(v)/2$.
Without loss of generality we
assume that the vertex $v_k$ in the path $P$ is followed by $v_{k+1}$.

Let $e_i = (v_{i-1}, v_i)$ be the last arc in the path $P$ from $v_j$ to
$v_i$.
Assume for contradiction that $e_i$ is not residual with respect to $f'$
and $c'$. Then $f'(e_i) = c(v)/2$ and so the total flow which enters into
$v_i$ in $f'$ is $\sum_{e \in in(v_i)} f'(v_i) \geq f'((u, v_i)) +
f'(e_i) > c(v)/2$. The only remaining arc that can carry flow
out of $v_i$ is $(v_i, v_{i+1})$. Because $f'$ satisfies flow
conservation constraints $f'((v_i, v_{i+1})) > c(v)/2$.
But this is impossible since the capacity of the arc $(v_i, v_{i+1})$ is
$c(v)/2$. Therefore, $f'(e_i) < c(v)/2$ and $e_i$ is residual
with respect to $f'$ and $c'$.

We now proceed by induction. Assume by induction that we already
know that the arc $e_{k+1} = (v_k, v_{k+1})$ on the path $P$
from $v_j$ to $v_i$ is residual with respect to $f'$ and $c'$.
If $k = j$ then we are done. Otherwise, we
prove that $e_k = (v_{k-1}, v_k) \in P$ is also
residual with respect to $f'$ and $c'$. Since  $e_{k+1}$ is
residual it follows that
 $f'(e_{k+1}) < c(v)/2$ .
Let $e'$ be the single arc of $E$ incident to $v_k$. If $e'$ is
directed out of $v_k$ and $f'(e') > 0$ then since $C$ is a cycle and
$e'$ is inside $C$ in the embedding of $G$, there must be a path
carrying flow that starts with $e'$ and continues to another vertex
on $C$. (Recall that $t$ is incident to the outer face.) This
implies that there is a counterclockwise flow-cycle with respect to
$\hat{f'}$ and $G$ inside the embedding of $C$, in contradiction to
the choice of $C$. Therefore $e'$ does not carry flow of $f'$ out of
$v_k$ in $G_e$. This implies, by the conservation constraint on $v_k$,
that $f'(e_k) \leq f'(e_{k+1})< c(v) /2$,
 so $e_k$ is indeed residual
with respect to $f'$ and $c'$.

We showed that there is a residual path from $v_j$ to $v_i$.
Since $v$ was an arbitrary vertex with $c(v) > 0$ on $C$ it follows that
we can extend $rev(C)$ to a residual cycle in $G_e$ with respect to $f'$ and
$c'$. Since $C$ is a counterclockwise cycle, the residual cycle we got from
$rev(C)$ is a clockwise cycle.
This contradicts the definition of $f'$, and therefore a counterclockwise
flow-cycle $C$ with respect to $\hat{f'}$ and $G$ does not exist.
\end{proof}

We repeat the previous procedure symmetrically, by defining a new capacity
$c''$ which restricts the flow in $G$ to the flow in $f'$, and applying a
symmetric version of the algorithm of Section \ref{sec:rcyc}. This way we
get from $f'$ a flow $f''$ of the same value, such that $\hat{f''}$
does not contain clockwise flow-cycles in $G$. For
every $e \in E$ we changed the flow such that $\hat{f''}(e) \leq
\hat{f'}(e) \leq \hat{f}(e)$, so we did not create any new flow-cycles.
Therefore we have the following lemma.
\begin{lemma} \label{lem:fpp}
The flow function $f''$ has the same value as the flow function $f$. The restriction
$\hat{f''}$ of $f''$ to $G$ is acyclic.
\end{lemma}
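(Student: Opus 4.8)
The plan is to establish the two assertions separately, drawing on the construction and on the preceding lemma. For the equality of values, I would simply chain the two value-preserving steps. The flow $f'$ is obtained from $f$ by the algorithm of Section \ref{sec:rcyc}, which adds a circulation and hence preserves the value; so $f'$ has the same value as $f$. The flow $f''$ is obtained from $f'$ by the symmetric version of the same algorithm, which again adds a circulation and preserves the value; so $f''$ has the same value as $f'$, and therefore as $f$.

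For acyclicity, the key observation is that every directed cycle in a plane embedding is either clockwise or counterclockwise, so it suffices to rule out flow-cycles of each orientation in $\hat{f''}$. The symmetric version of the preceding lemma directly gives that $\hat{f''}$ contains no clockwise flow-cycle. It remains to rule out counterclockwise flow-cycles, and here I would use monotonicity. By the choice of the capacity function $c''$, the symmetric pass only restricts the flow on arcs of $G$, so $\hat{f''}(e) \leq \hat{f'}(e)$ for every $e \in E$. Consequently every arc carrying positive flow under $\hat{f''}$ also carries positive flow under $\hat{f'}$. Thus if $\hat{f''}$ had a counterclockwise flow-cycle $C$, every arc of $C$ would satisfy $\hat{f''}(e) > 0$ and hence $\hat{f'}(e) > 0$, making $C$ a counterclockwise flow-cycle with respect to $\hat{f'}$ as well. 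This contradicts the preceding lemma, which asserts that $\hat{f'}$ has no counterclockwise flow-cycle. Hence $\hat{f''}$ has neither clockwise nor counterclockwise flow-cycles, and is therefore acyclic.

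I expect the main subtlety to be exactly this monotonicity argument: one must ensure that the second, symmetric pass, which is designed to destroy the clockwise flow-cycles, does not inadvertently recreate any of the counterclockwise flow-cycles eliminated by the first pass. The inequality $\hat{f''}(e) \leq \hat{f'}(e)$, guaranteed by the definition of $c''$, is precisely what prevents this, since reducing the flow on an arc can only remove it from the support and can never add a new arc to a potential cycle. Once this is in place, the two orientations combine to give full acyclicity, and together with the value computation the lemma follows.
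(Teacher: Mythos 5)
Your proposal is correct and follows essentially the same route as the paper: the value is preserved because each pass adds a circulation, clockwise flow-cycles are excluded by the symmetric application of the preceding lemma, and counterclockwise flow-cycles are excluded by the monotonicity $\hat{f''}(e) \leq \hat{f'}(e) \leq \hat{f}(e)$ forced by the capacities $c'$ and $c''$, which is exactly the paper's ``we did not create any new flow-cycles'' argument.
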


\section{The algorithm}

Combining together the results of the previous sections we get an
algorithm for finding maximum flow in a directed planar graph with
vertex capacities.

First, we construct $G_e$ from $G$ by replacing each vertex that has a
finite capacity with $C_v$ as defined in Section \ref{sec:ext}. Next,
we find a maximum flow $f$ in $G_e$, which is a directed planar graph
without vertex capacities.  Last, we change
 $f$ to another flow $f''$ as in Section \ref{sec:canc}.

According to
Lemma \ref{lem:fpp}, the flow $f''$ is maximum flow in $G_e$, and
 its restriction $\hat{f''}$ is acyclic.
 By Lemma \ref{lem:acyc}, the function $\hat{f''}$ is a flow in
 $G$. And by Lemma \ref{lem:Ge}, $\hat{f''}$ is a maximum flow, since the amount of
 flow that $\hat{f''}$ carries into $t$ is the same as the amount of
 flow that $f$ carries into $t$.

 The construction of $G_e$ from $G$ takes $O(n)$ time. The computation
 of $f''$ from $f$ also takes $O(n)$ time using the algorithm
we described in Section \ref{sec:rcyc}. Therefore,
 the only bottleneck of our algorithm is finding $f$, a maximum flow
 in a directed planar graph without vertex capacities.

\begin{theorem}
  The maximum flow in a directed planar graph with both arc capacities
  and vertex capacities can be computed within the same time bound as the
  maximum flow in a directed planar graph with  arc capacities only.
\end{theorem}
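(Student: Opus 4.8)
The plan is to assemble the pieces established in the preceding sections into a single reduction and account for the running time of each step. The key observation is that the entire difficulty of vertex capacities has been isolated into the construction of the extended graph $G_e$ together with the cycle-canceling procedure, so the final theorem reduces to a bookkeeping argument: everything except the call to the arc-capacitated max-flow subroutine runs in linear time.

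First I would lay out the algorithm explicitly, in the order its correctness is established. Given $G$ with arc and vertex capacities, construct $G_e$ by replacing each finite-capacity vertex $v$ with its cycle $C_v$, preserving the clockwise order of incident arcs; this is a purely local, linear-time transformation and $G_e$ is a simple directed planar graph with \emph{no} vertex capacities. Then invoke the given max-flow subroutine for arc-capacitated directed planar graphs to obtain a maximum flow $f$ in $G_e$. Finally, apply the two-pass cycle-canceling procedure of Section \ref{sec:canc} to convert $f$ into a flow $f''$ of the same value whose restriction $\hat{f''}$ to $G$ is acyclic.

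The correctness argument is then a short chain of citations. By Lemma \ref{lem:fpp}, $f''$ has the same value as $f$ (hence is still a maximum flow in $G_e$) and $\hat{f''}$ is acyclic. By Lemma \ref{lem:acyc}, since $\hat{f''}$ is acyclic it is a genuine flow in $G$ (satisfying in particular the vertex-capacity constraints). By Lemma \ref{lem:Ge} the maximum flow values of $G$ and $G_e$ coincide, and since the flow into $t$ is the same in $f''$ as in $f$ (the sink is untouched by the construction and $t$ lies on the outer face), $\hat{f''}$ attains this common optimum and is therefore a maximum flow in $G$.

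For the timing, I would note that the construction of $G_e$ is $O(n)$, and the conversion of $f$ to $f''$ is $O(n)$ by the two applications of the procedure of Section \ref{sec:rcyc}, whose cost is dominated by a single single-source shortest-path computation in a planar dual graph, solvable in $O(n)$ time via Henzinger et al.~\cite{HKRS97}. Hence the only non-linear cost is the black-box max-flow computation on $G_e$, which gives precisely the claimed bound: vertex capacities add no asymptotic overhead. The main (very mild) obstacle is simply to confirm that $G_e$ has size $O(n)$ so that the subroutine's bound transfers verbatim; this follows because each vertex $v$ is split into $\deg(v)$ copies and $\sum_v \deg(v) = 2|E| = O(n)$ in a planar graph, so $G_e$ has $O(n)$ vertices and arcs and the subroutine on $G_e$ runs within the same asymptotic bound as on an $n$-vertex input.
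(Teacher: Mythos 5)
Your proposal is correct and follows essentially the same route as the paper: construct $G_e$, run the arc-capacitated planar max-flow subroutine, convert the resulting flow to $f''$ via the procedure of Section \ref{sec:canc}, and then chain Lemma \ref{lem:fpp}, Lemma \ref{lem:acyc}, and Lemma \ref{lem:Ge} for correctness, with both linear-time steps charged to the construction and to the shortest-path-based cycle-canceling. Your explicit check that $G_e$ has $O(n)$ size (since $\sum_v \deg(v) = 2|E| = O(n)$) is a small detail the paper leaves implicit, but it does not change the argument.
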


The algorithm of Borradaile and Klein~\cite{BK} finds a maximum flow
in directed planar graph with arcs capacities in $O(n\log n)$ time. If
$G$ is a $st$-planar graph, then $G_e$ preserves this property. In
this case the algorithm of Hassin~\cite{H81}, using the algorithm of
\cite{HKRS97} for single-source shortest-path distances, finds a
maximum flow in $O(n)$ time.

\end{document}